\newcommand{\TODO}[1]{
\typeout{WARNING!!! there is still a TODO left}
\marginpar{\textbf{!TODO: }\emph{#1}}
}
\newcommand{\TODO}[1]{}
\newenvironment{todo}[1]{\noindent\rule{.3\textwidth}{1pt}\TODO{#1}\\}{\\\rule{.3\textwidth}{1pt}}
\newcommand{\NOTE}[1]{
\typeout{WARNING!!! there are still DRAFT NOTES left}
\marginpar{!DRAFT}\emph{\textbf{DRAFT NOTES:} #1}
}
\newcommand{\NOTE}[1]{}
\newcommand{\comp}[1]{{#1}^c}
\newcommand{\ket}[1]{\left| #1 \right\rangle}
\title{Reversibility in the Extended Measurement-based~Quantum~Computation}
\author{Nidhal Hamrit\inst{1,3}\and Simon Perdrix \inst{2,3}}
\institute{Telecom ParisTech \and CNRS \and LORIA, UMR 7503, CNRS, Inria, Universit\'e de Lorraine, Nancy}
\date{}
\begin{document}

\maketitle

\begin{abstract}
When applied on some particular quantum entangled states, measurements are universal for quantum computing. In particular, despite the fondamental probabilistic evolution of quantum measurements, any unitary evolution can be simulated by a measurement-based quantum computer (MBQC). We consider the extended version of the MBQC where each measurement can occur not only in the (X,Y)-plane of the Bloch sphere but also in the (X,Z)- and (Y,Z)-planes. The existence of a gflow in the underlying graph of the computation is a necessary and sufficient condition for a certain kind of determinism. We extend the focused gflow (a gflow in a particular normal form) defined for the (X,Y)-plane to the extended case, and we provide necessary and sufficient conditions for the existence of such normal forms. 
%
\end{abstract}

\section{Introduction}

Performing one-qubit measurements on an initially entangled state called  \emph{graph state} \cite{HEB04} is a universal model for quantum computation introduced by Raussendorf and Briegel \cite{RB01,RBB03}. This model is very promising  for the physical implementation of a quantum computer \cite{Wetal,Petal}. The measurement-calculus \cite{DKP,DKPP} is a formal framework for measurement-based quantum computation. In the original model introduced by Briegel and Raussendorf, all measurements are applied in the so called $\{X,Y\}$-plane of the Bloch sphere, however the model can be extended to other planes, namely $\{X,Z\}$ and $\{Y,Z\}$-planes. For instance, measurements in the $\{X,Z\}$-planes are universal \cite{MP} for quantum computation, with the particular property that only real numbers are used in this case. The Extended  Measurement-Calculus \cite{DKPP} is an extension of the Measurement-Calculus in which the three possible planes of measurement are available.

The question of the reversibility is central in measurement-based quantum computation since the key ingredient of this model -- the quantum measurement -- has a fundamentally probabilistic evolution. Reversibility is essential for the simulation of quantum circuits, and as a consequence for the universality of the model. For deciding whether an initial resource (a graph state) can be used to implement a  reversibile evolutions, a graphical condition called \emph{gflow} has been introduced \cite{DK,BKMP}. 

Gflow is not unique in general. In the non-extended case a focused gflow \cite{MMPST}  is nothing but a gflow in some normal form. We consider three natural extensions of the focused gflow for the extended measurement based quantum computation and we study in which cases these normal forms exist. 


\section{Extended Measurement-based Quantum Computation}

In this section, a brief description of the extended measurement-based quantum computation is given, a more detailed introduction can be found in \cite{DKP,DKPP}. An measurement-based quantum computation (MBQC) is: 
\begin{itemize}
\item[(i)] {\bf Initialisation.} An \emph{open graph} $(G,I,O)$ which describes the initial entanglement ($G=(V,E)$ is a simple undirected graph), the inputs ($I\subseteq V$) and outputs ($O\subseteq V$) of the computation. The initial entanglement is obtained by applying the following preparation map $N$ which associates with every arbitrary input state located on the input qubits the initial entangled state of the MBQC:
\begin{eqnarray*}N : \mathbb C^{\{0,1\}^I}&\to& \mathbb C^{\{0,1\}^{V}}\\\ket x &\mapsto&   \frac{1}{\sqrt{2^{|\comp{I}|}}} \sum_{y\in \{0,1\}^{\comp{I}}} (-1)^{|G[x,y]|}\ket{x,y}\end{eqnarray*} 
where $G[x,y]$ denotes the subgraph of $G$ induced by the supports of $x$ and $y$ and $|G[x,y]|$ its size. 
In other words $|G[x,y]|$ is the number of edges $(u,v)\in E$ such that $(x(u){=}1 \vee y(u) {=} 1) \wedge (u(v){=}1 \vee y(v){=}1)$; 
~\\

\item[(ii)] {\bf Measurements.} For every non output qubit $u\in \comp{O}$, $\alpha(u)\in [0,2\pi)$ and two distinct Pauli operators $\lambda_1(u)$, $\lambda_2(u) \in \{X,Y,Z\}$ 
 describe  the \emph{plane} $\{\lambda_1(u),\lambda_2(u)\}$ and the \emph{angle} $\alpha(u)$ according to which the qubit $u$ is measured i.e., $u$ is measured according to the observable $$\cos(\alpha(u))\lambda_1(u)+\sin(\alpha(u))\lambda_2(u)$$ 
 Measurement of qubit $u$ produces a classical outcome $(-1)^{s_u}$ where $s_u\in \{0,1\}$ is called \emph{signal}, or simply \emph{classical outcome} with a slight abuse of notation; 
 
 ~\\
 
 \item[(iii)] {\bf Corrections.} Two maps $\mathtt x, \mathtt z: \comp{O} \to 2^{V}$ called \emph{corrective maps}. Corrections work as follows: for every non output qubit $u$, the measurement of qubit $u$ is followed by the application of $X^{s_u}$ on the qubits in $\mathtt x(u)$ and $Z^{s_u}$ on the qubits in $\mathtt z (u)$. A vertex $v\in \mathtt x(u)\cup \mathtt z(u)$ is called a \emph{corrector} of $u$.
 The maps $\mathtt x, \mathtt z$ should be \emph{extensive} in the sense that there exists a partial order $\prec$ over the vertices of the graph s.t. any corrector $v$ of a vertex $u$ is larger than $u$, i.e. $v\in \mathtt x(u)\cup \mathtt  z(u)$ implies  $u\prec v$.  The extensivity of $\mathtt x$ and $\mathtt z$ guarantees that the corrections are applied on qubits which are no yet measured. 
\end{itemize}

The \emph{extended} variant of MBQC refers to the possibility to perform measurements in the three possible planes $\{X,Y\}$, $\{X,Z\}$ and $\{Y,Z\}$ of the Bloch sphere, whereas all measurements are performed in the $\{X,Y\}$-plane in the original measurement-based quantum computation.  

\section{Reversibility, Determinism, and Generalized Flow}

Despite of the probabilistic evolution of quantum measurements, the correction mechanism can be used to make the overall evolution of an MBQC reversible which means that there exists an isometry $U$ ($U^\dagger U=\mathbb I$) from the input to the output qubits such that, whatever the classical outcomes of the measurements during the computation are, the evolution implemented by the MBQC is $U$.  In the context of measurement-based quantum computation this form of reversibility is called \emph{determinism} \cite{DK}. 
Determinism is an essential feature which is used for instance for proving that any quantum circuit can be simulated by an MBQC. Thus, this is a key ingredient for the universality of the model for quantum computing. The existence of a correction strategy that makes an MBQC deterministic crucially depends on the initial entangled state, i.e. on the open graph $(G,I,O)$ and the planes of measurement: given $\lambda: \comp{O}\to \{\{X,Y\},\{X,Z\},\{Y,Z\}\}$ a map which associates with every non output qubit its  plane of measurement, an \emph{extended open graph} $(G,I,O,\lambda)$ is \emph{uniformly deterministic} if for any measurement angles $\alpha:\comp{O}\to [0,2\pi)$, there exist two corrective maps $\mathtt x$ and $\mathtt z$ such that the corresponding MBQC is deterministic.  

Significant efforts have been made to characterize the open graphs that guarantees uniform determinism. Flow \cite{DK}, and generalised flow (\emph{gflow}) \cite{BKMP} are graphical conditions which are sufficient for uniform determinism.  
Gflow can be defined as follows for the extended open graphs:



\begin{definition}[GFlow]
An extended open graph $(G,I,O,\lambda)$ has a gflow if there exists $g:\comp{O}\to 2^{\comp{I}}$ s.t. $u\mapsto g(u)\cup Odd(g(u))$ is extensive and 
  for any $u \in \comp{O}$, 
\begin{eqnarray*}
\lambda(u) = \{X,Y\} &\Rightarrow &u\in Odd(g(u))\setminus g(u)\\
\lambda(u) = \{X,Z\} & \Rightarrow &u\in g(u) \cap Odd(g(u))\\
\lambda(u) = \{Y,Z\}&\Rightarrow &u\in g(u)\setminus Odd(g(u))
\end{eqnarray*}
where $Odd(A)=\{w\in V~|~|N(w)\cap A| = 0\bmod 2\}$ is the \emph{odd neighbourhood} of $A$ 
and a map   $f:\comp{O}\to 2^V$ is \emph{extensive} if there exists a partial order $\prec$ such that for any $u\in \comp{O}$, $u$ is smaller than its image by $f$ i.e., $\forall v\in V\setminus \{u\}, v\in f(u)\Rightarrow u\prec v$. 
\end{definition}


Concretely, if an extended open graph $(G,I,O)$ has a gflow $g$ then for any measurement angles $\alpha:\comp{O}\to [0, 2\pi)$ the corrective maps defined as $\forall u\in \comp{O}, \mathtt x(u):= g(u)\setminus \{u\}$ and $\mathtt z(u):= Odd(g(u)) \setminus \{u\}$ guarantees that the corresponding MBQC is deterministic \cite{BKMP}. 

With some additional assumptions gflow is not only sufficient but also necessary for determinism in measurement-based quantum computing. More precisely, there are mainly two cases to consider, depending on the number of inputs and outputs of the computation. When there are as many inputs as outputs, determinism corresponds to the notion of unitary evolution (evolution $U$ s.t. $U^\dagger U=UU^\dagger = \mathbb I$). In this particular case,  
 the gflow condition is necessary for strong -- i.e., all measurements occur with the same probability -- uniform determinism \cite{MMPST}. 
In the general case, when the number of inputs and outputs may differ, determinism corresponds to isometries (also called \emph{unitary embedding}). In this general case,  gflow characterizes \emph{stepwise} strong uniform determinism (roughly speaking the additional stepwise condition means that any partial computation is also deterministic) \cite{BKMP}. Notice that it is not known whether the strong and stepwise conditions are required: there is no known example of uniformly deterministic MBQC which corresponding open graph does not have a gflow.

%

Notice that if an extended open graph has a gflow then all the input qubits must be measured in the $\{X,Y\}$-plane:

\begin{property}\label{prop:XY}
If an extended open graph $(G,I,O,\lambda)$ has a gflow then $\forall u\in I\cap \comp{O}$, $\lambda(u) = \{X,Y\}$. 
\end{property}

\begin{proof} Let $g$ be a gflow for $(G,I,O,\lambda)$, and $u\in I\cap \comp{O}$, since for any $u\in \comp{O}$, $g(u)\subseteq \comp{I}$, $u\notin g(u)$, thus according to the definition of gflow, $\lambda(u)\neq \{X,Z\}$ and $\lambda(u)\neq \{Y,Z\}$. \hfill $\Box$ 
\end{proof}

\section{Focused Gflow and Normal Forms}

The gflow of an (extended) open graph is not unique in general. In the non extended case i.e., when all measurements are performed in the $\{X,Y\}$-plane several classes of gflow have been identified: 
%
%
%
%
%
%
%
the \emph{maximally delayed gflow} which depth is minimal and which is produced by an polytime algorithm \cite{MP08}; and the focus gflow which guarantees that the $\mathtt z$ corrective map acts only on the output qubits. The definition of focused gflow is as follows: Given an open graph $(G,I,O)$, a gflow $g$ is \emph{focused} if $\forall u\in \comp{O}$,  $Odd(g(u))\cap \comp{O} = \{u\}$. Since any gflow can be transformed into a focused gflow \cite{MMPST}, focused gflow can be used to characterize the open graphs that have a gflow:

\begin{property}
An open graph $(G,I,O)$ has a gflow if and only if there exists $g: \comp{O} \to 2^{\comp{I}}$ extensive such that
$\forall u\in \comp{O}$, $$Odd(g(u))\cap \comp{O} = \{u\}$$
\end{property}

Focused gflow is a simpler but  equivalent variant of gflow, which can be used for instance as a tool for quantum circuits translation and optimisation \cite{BK,DP,DPK}.

So far, there is no definition of `focused' gflow in the context of the extended MBQC. By symmetry, there are three natural kinds of `focused' extended gflow: those for which $Odd(g(u))\cap \comp{O} \subseteq \{u\}$; those for which $g(u)\cap \comp{O} \subseteq \{u\}$; and finally those for which $g(u)\oplus Odd(g(u))\cap \comp{O} \subseteq \{u\}$, $\oplus$ denotes the \emph{symmetric difference}. We define the corresponding three normal forms (NF for short) for extended gflows:

\begin{definition}[Normal forms] 
A gflow $g$  of an extended open graph $(G,I,O,\lambda)$ is
\begin{itemize}
\item $X$-NF  if $\forall u\in \comp{O}$, $$Odd(g(u)) \subseteq  \{u\}\cup O$$
\item $Y$-NF  if $\forall u\in \comp{O}$, $$\left(Odd(g(u))\oplus g(u) \right)  \subseteq  \{u\}\cup O$$
\item $Z$-NF  if $\forall u\in \comp{O}$, $$g(u)  \subseteq  \{u\}\cup O$$
\end{itemize}
\end{definition}

Intuitively a $\sigma$-NF, for $\sigma\in \{X,Y,Z\}$, guarantees that in the corresponding MBQC all the correctors applied on the non output qubits are Pauli-$\sigma$ operators. For instance, given a Z-NF gflow, in the corresponding MBQC $\forall u\in \comp{O}, \mathtt x(u)=g(u)\setminus \{u\}\subseteq O$ which implies that all Pauli correctors applied on non output qubits are $Z$ operators. Given a Y-NF gflow, in the corresponding MBQC $\forall u\in \comp{O}, \mathtt x(u)\cap \comp{O} = \mathtt z(u)\cap \comp{O}$ which means that all the Pauli correctors applied on non output qubits are products of $X$ and $Z$ which is nothing but Pauli-Y operators (up to a global phase).  
 Notice that given an open graph $(G,I,O)$, $g$ is a focused gflow of $(G,I,O)$ if and only if $g$ is a X-NF gflow of $(G,I,O,u\mapsto \{X,Y\})$. 


\section{Existence of Normal Forms}
In this section we consider the problem of the existence of gflow in normal forms. First notice that some extended open graphs  have a gflow but no $Z$-NF gflow for instance. The following extended open graph $(G,I,O,\lambda)$ where $G=(\{1,2,3\}, \{(1,2),(2,3)\})$, $I=\{1\}$, $O=\{3\}$ and $\lambda(1)= \lambda (2) = \{X,Y\}$ admits exactly two gflows $g$ and $g'$ ($g(1) = \{1\}$, $g'(1)=\{2,3\}$, and $g(2)=g'(2)=\{3\}$), none of them is in the Z-normal form.  

\begin{psfrags}
   \psfrag{1}{$1$}
      \psfrag{2}{$2$}
         \psfrag{3}{$3$}
               \psfrag{XY}[c]{~~~~~$\{X,Y\}$}
\begin{center}
\includegraphics[scale=0.8]{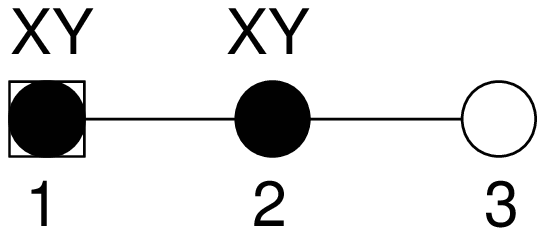}
\end{center}
\end{psfrags}

This simple example points out  a crucial difference with respect to  the non-extended case for which any gflow can be turned into a focused gflow. 
A sufficient condition for the existence of a $\sigma$-NF gflow for an extended  open graph with gflow is that every non-input measurement plane  contains $\sigma$:

\begin{theorem}\label{thm:sc}
If an extended open graph $(G,I,O,\lambda)$ has a gflow then, for any $\sigma\in \bigcap_{u\in \comp{I}\cap \comp{O}} \lambda(u)$, $(G,I,O,\lambda)$ has a  $\sigma$-NF gflow. 
 \end{theorem}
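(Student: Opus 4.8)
The plan is to start from an arbitrary gflow $g$ of $(G,I,O,\lambda)$ and, given a fixed $\sigma\in\bigcap_{u\in\comp I\cap\comp O}\lambda(u)$, to rewrite it step by step into a $\sigma$-NF gflow. The natural strategy is a greedy elimination argument analogous to the one used to focus a gflow in the non-extended case \cite{MMPST}: process the vertices in an order compatible with the partial order $\prec$ induced by $g$ (say from the latest-measured toward the earliest), and for each non-output vertex $u$, remove from the relevant set ($Odd(g(u))$ for $X$-NF, $g(u)\oplus Odd(g(u))$ for $Y$-NF, $g(u)$ for $Z$-NF) any unwanted occurrence of a non-output vertex $v\neq u$ by replacing $g(u)$ with $g(u)\oplus g(v)$. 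The key point is that because $\sigma\in\lambda(v)$, the vertex $v$ itself belongs to the set we want to shrink when evaluated at $v$: e.g. for $Z$-NF, $\sigma=Z$ means $\lambda(v)\in\{\{X,Z\},\{Y,Z\}\}$, so by the gflow conditions $v\in g(v)$; hence replacing $g(u)$ by $g(u)\oplus g(v)$ removes $v$ from $g(u)$. One must also check the remaining two $\sigma$ cases ($\sigma=X$ uses $v\in Odd(g(v))$, valid for $\{X,Y\}$ and $\{X,Z\}$; $\sigma=Y$ uses $v\in g(v)\oplus Odd(g(v))$ — i.e. $v$ is in exactly one of $g(v),Odd(g(v))$ — valid for $\{X,Y\}$ and $\{Y,Z\}$), and note that by Property~\ref{prop:XY} every input vertex is in the $\{X,Y\}$-plane, so the hypothesis $\sigma\in\bigcap_{u\in\comp I\cap\comp O}\lambda(u)$ is only restrictive for $\sigma=Z$ or $\sigma=Y$, and even then inputs are automatically fine for $\sigma=X$; for $\sigma\in\{Y,Z\}$ one needs $g(v)\subseteq\comp I$ to be preserved, which it is since $g(v)\subseteq\comp I$ always and $g(u)\oplus g(v)\subseteq\comp I$.

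Concretely, I would fix an enumeration $v_1\prec'\cdots\prec' v_n$ of $\comp O$ that is a linear extension of $\prec$ reversed, and maintain the invariant that after processing $v_1,\dots,v_k$, the (current) gflow $g$ satisfies the $\sigma$-NF condition at $v_1,\dots,v_k$ and is still a valid gflow with a compatible partial order. Processing $v_{k+1}=u$: while there is $v\in(\text{bad set at }u)\cap\comp O$ with $v\neq u$, pick the $\prec'$-earliest such $v$ (which must already be among the processed ones, since $v$ being a corrector-region element of $u$ forces $u\prec v$, hence $v\prec' u$), and set $g(u)\leftarrow g(u)\oplus g(v)$. Because $v$'s own $\sigma$-NF condition is already established, $Odd(g(v))\subseteq\{v\}\cup O$ (resp. the $Y$- or $Z$-analogue), so adding $g(v)$ to $g(u)$ affects the bad set of $u$ only at $v$ (removing it) and possibly at output vertices, which are irrelevant; in particular it does not reintroduce any $v'\prec' v$ into the bad set, so the while-loop terminates. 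After the update one checks that $g$ is still a gflow: the three local conditions at $u$ are preserved because $v$ contributes nothing outside $\{v\}\cup O$ to the relevant set, and extensivity is preserved because $g(u)\cup Odd(g(u))$ only gains elements that lie in $\{v\}\cup O$ with $v$ a successor of $u$ — a standard verification that the modified corrector structure remains $\prec$-extensive (possibly after augmenting $\prec$, as in \cite{MMPST}).

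The main obstacle I anticipate is precisely the bookkeeping around extensivity: after repeatedly XOR-ing in later corrector sets, one must argue that a single partial order still witnesses extensivity of $u\mapsto g(u)\cup Odd(g(u))$ simultaneously for all $u$, rather than getting cyclic dependencies. The clean way to handle this is to do the elimination in the order described (latest first) and observe that all newly introduced elements at $u$ are strictly $\prec$-above $u$, so the reachability relation only shrinks toward $\prec$ and no cycle is created; making this rigorous is the technical heart of the proof. A secondary subtlety is the $Y$-NF case, where the ``bad set'' is a symmetric difference and one must be careful that $v$ occurs in $g(v)\oplus Odd(g(v))$ with the right parity — this is exactly the statement $v\in g(v)\setminus Odd(g(v))$ or $v\in Odd(g(v))\setminus g(v)$, which holds for planes $\{Y,Z\}$ and $\{X,Y\}$ respectively, i.e. for every plane containing $Y$ — so the same mechanism applies uniformly. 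I would also double-check the base/boundary behaviour on input vertices using Property~\ref{prop:XY} to ensure the construction never needs to correct an input vertex in a way incompatible with $g(u)\subseteq\comp I$.
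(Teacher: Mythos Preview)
Your proposal is correct and is essentially the same argument as the paper's: the paper packages your iterative XOR-elimination into a single recursive definition $g_\sigma(u) := g(u)\oplus\bigoplus_{v} g_\sigma(v)$ (the sum ranging over the non-output $v\neq u$ in the original ``bad set'' of $u$), and then verifies extensivity with the \emph{same} order $\prec$, the local gflow condition at $u$ (via $u\notin g_\sigma(v)\cup Odd(g_\sigma(v))$ for $v\succ u$), and the $\sigma$-NF property by exactly the induction you outline. Your anticipated obstacle about extensivity dissolves just as you guessed: every newly introduced element at $u$ lies strictly above $u$ in the original $\prec$, so no augmentation of the order is needed.
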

 
 \begin{proof}
 Let $g$ be a gflow for $(G,I,O,\lambda)$, and $\sigma\in  \bigcap_{u\in \comp{I}\cap \comp{O}} \lambda(u)$. We define $g_\sigma:\comp{O}\to 2^{\comp{I}}$ as follows, depending on $\sigma$:
\begin{eqnarray*}
g_X(u)&:=& g(u)\oplus \left(\bigoplus_{{v\in Odd(g(u))\setminus (O\cup \{u\})}}g_X(v)\right)\\
  g_Y(u)&:=& g(u)\oplus \left(\bigoplus_{{v\in (g(u)\oplus Odd(g(u)))\setminus (O\cup \{u\})}}g_Y(v)\right)\\
  g_Z(u)&:=& g(u)\oplus \left(\bigoplus_{{v\in g(u)\setminus (O\cup \{u\})}}g_Z(v)\right)\\
  \end{eqnarray*}
  Extensivity of $g$ guarantees that $g_\sigma$ is well-defined. In the following we prove that $g_\sigma$ is a gflow, and then that $g_\sigma$ is in $\sigma$-NF. \\{[gflow]} Let $\prec$ a partial order according to which $g$ is extensive, we show that $g_\sigma$ is also extensive according to $\prec$. Indeed, for any $u\in \comp{O}$ and any $w\in V\setminus \{u\}$, s.t. $w\in g_\sigma(u)$, by induction if there is no larger elements in $\comp{O}$ then $g_\sigma(u) = g(u)$, so $u\prec w$. Otherwise, $w\in g(u)\cup (\bigcup_{v\in g(u)\cup Odd(g(u))\setminus (O\cup \{u\})}g_\sigma(v))$, so either (i) $w\in g(u)$ which implies $u\prec w$, or (ii) $\exists v\in g(u)\cup Odd(g(u))$ s.t. $w\in g_\sigma(v)$, so $u\prec v$ and, by induction, $v\prec w$ which implies $u\prec w$. \\
Regarding the remaining gflow conditions, notice that the extensivity of $g$ and $g_\sigma$ guarantees that for any $u\in \comp{O}$, $g_\sigma(u)\cap \{u\} = g(u)\cap \{u\}$ and $Odd(g_\sigma(u))\cap \{u\} = Odd(g(u))\cap \{u\}$ (the linearity of $Odd$ is also used in this second case: $Odd(A\oplus B) = Odd(A)\oplus Odd(B)$). Thus $g_\sigma$ is a gflow.
\\{[$\sigma$-NF]} In the following we prove that $g_\sigma$ is in a $\sigma$-NF. W.l.o.g. assume $\sigma=Y$ (the  other two cases are similar). We actually prove by induction that $\forall u \in \comp{O}$, $Odd(g_Y(u)\oplus g_Y(u)) \cap \comp{O}= \{u\}$. Let $u\in \comp{O}$.
\begin{itemize}
\item  If there is no larger element according to $\prec$ (the partial order induced by $g$ and $g_Y$) in $\comp{O}$, then $Odd(g_Y(u))\oplus g_Y(u)\subseteq Odd(g_Y(u))\cup g_Y(u)\subseteq \{u\}\cup O$ by extensivity of $g_Y$, moreover since $Y\in \lambda(u)$, $u\in Odd(g_Y(u))\oplus g_Y(u)$, so $(Odd(g_Y(u))\oplus g_Y(u)) \cap \comp{O} = \{u\}$. 
\item Otherwise, $(Odd(g_Y(u))\oplus g_Y(u))\cap \comp{O}=$
\begin{eqnarray*}
&& \left(Odd(g(u))\oplus g(u) \oplus \left(\hspace{-2.8cm}\bigoplus_{{\hspace{2.8cm}v\in (g(u)\oplus Odd(g(u)))\setminus (O\cup \{u\})}} \hspace{-2.8cm}Odd(g_Y(v))\oplus g_Y(v)\right)\right)\cap \comp{O}\\
&=&\left(Odd(g(u))\oplus g(u)\right)\cap \comp{O} \oplus \left(\hspace{-2.8cm}\bigoplus_{{\hspace{2.8cm}v\in (g(u)\oplus Odd(g(u)))\setminus (O\cup \{u\})}} \hspace{-2.8cm}(Odd(g_Y(v))\oplus g_Y(v))\cap \comp{O}\right)\\
&=&\left(Odd(g(u))\oplus g(u)\right)\cap \comp{O} \oplus \left(\hspace{-2cm}\bigoplus_{{\hspace{2cm}v\in (g(u)\oplus Odd(g(u)))\setminus (O\cup \{u\})}} \hspace{-2cm}\{v\}~~~~~~~~~~~~~\right)\\
&=&\left(Odd(g(u))\oplus g(u)\right)\cap \comp{O} \oplus \left((g(u)\oplus Odd(g(u)))\setminus (O\cup \{u\})\right)\\
&=&\left(Odd(g(u))\oplus g(u)\right)\cap \{u\}
\end{eqnarray*}
Moreover, since $Y\in \lambda(u)$, $u\in Odd(g(u))\oplus g(u)$, so $(Odd(g_Y(u))\oplus g_Y(u))\cap \comp{O}=\{u\}$. \hfill $\Box$
\end{itemize}

  \end{proof}
 
 As a corollary, any (non extended) open graphs with gflow, admits both X- and Y-NF gflows. More generally, any extended open graph $(G,I,O,\lambda)$ with gflow such that $\lambda$ is constant over $\comp{I}\cap \comp{O}$ admits  both $\sigma$- and $\sigma'$-NF gflows where $\comp{I}\cap \comp{O}\subseteq \lambda^{-1}(\{\sigma,\sigma'\})$
 
 Theorem \ref{thm:sc} provides a sufficient condition for the existence of a $\sigma$-normal form. The following example points out that this condition is not necessary: in this extended open graph $\lambda(2)=\{X,Z\}$ however it admis the following Y-NF gflow $1\mapsto \{4\} ; 2\mapsto \{2,3,4\}$.   
 
 \begin{psfrags}
   \psfrag{1}{$1$}
      \psfrag{2}{$2$}
         \psfrag{3}{$3$}
                  \psfrag{4}{$4$}
               \psfrag{XY}[c]{~~~~~$\{X,Y\}$}
                         \psfrag{YZ}[c]{~~~~~$\{X,Z\}$}
\begin{center}
\includegraphics[scale=0.8]{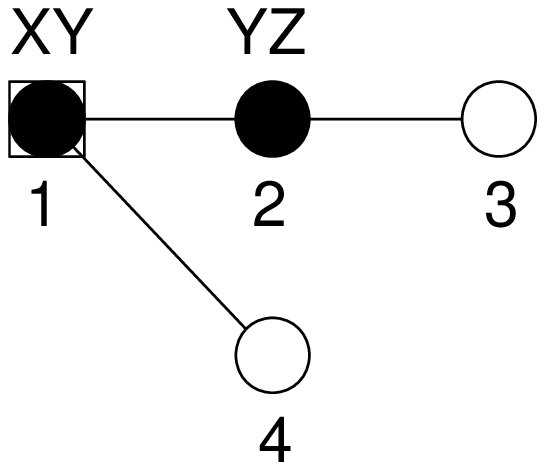}
\end{center}
\end{psfrags}

 Notice that in this counter example there are strictly more outputs than inputs.  %
 Indeed, we show that the existence of a $\sigma$-NF gflow with $\sigma \in \{Y,Z\}$, implies that the number non-input measurement-planes which do not contain $\sigma$ is upper bounded by the \emph{input defect} i.e., the difference between the number of outputs and inputs: 

 
%
%
%
%

\begin{theorem}\label{thm:cn}
 Given $\sigma\in \{Y,Z\}$ and an extended open graph $(G,I,O,\lambda)$, if $(G,I,O,\lambda)$ has a $\sigma$-NF gflow then $$|\{u \in \comp I \cap \comp O ~|~ \sigma\notin \lambda(u)\}| \le |O|-|I|$$
 \end{theorem}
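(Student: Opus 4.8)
\medskip\noindent\textbf{Proof proposal.}

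The plan is to deduce both cases from the elementary fact that an open graph admitting a gflow has at least as many outputs as (non-output) inputs: a gflow makes the associated MBQC deterministic (\cite{BKMP}), hence realises an isometry $\mathbb C^{2^{|I|}}\to\mathbb C^{2^{|O|}}$, so $|I|\le|O|$. Set $B:=\{u\in\comp I\cap\comp O\mid\sigma\notin\lambda(u)\}$; since $B\subseteq\comp I$ we have $B\cap I=\emptyset$, so the inequality to be proved is exactly $|I\cup B|\le|O|$. It therefore suffices to exhibit --- on the same graph $G$ and with the same output set $O$ --- a gflow for the open graph obtained from $(G,I,O,\lambda)$ by moving the vertices of $B$ into the input set (note $B\subseteq\comp O$, so this is again an open graph).

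For $\sigma=Z$ this works with the given gflow $g$ left unchanged. A vertex $u\in B$ has $\lambda(u)=\{X,Y\}$, so the gflow condition gives $u\notin g(u)$, and together with the $Z$-NF condition $g(u)\subseteq\{u\}\cup O$ this forces $g(u)\subseteq O$, whence $g(u)\cap B=\emptyset$ (as $B\subseteq\comp O$). For $u\notin B$ the $Z$-NF condition already gives $g(u)\cap\comp O\subseteq\{u\}$ and $u\notin B$, so again $g(u)\cap B=\emptyset$. Thus $g$ takes values in $2^{\comp{I\cup B}}$; as the gflow conditions and the extensivity of $u\mapsto g(u)\cup Odd(g(u))$ do not refer to the input set, $g$ is a gflow of $(G,I\cup B,O,\lambda)$ (consistently with Property~\ref{prop:XY}, every vertex of $(I\cup B)\cap\comp O$ is measured in the $\{X,Y\}$-plane). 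Hence $|I|+|B|=|I\cup B|\le|O|$, i.e. $|B|\le|O|-|I|$.

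For $\sigma=Y$ a bad vertex $u$ has $\lambda(u)=\{X,Z\}$, so $u\in g(u)\cap Odd(g(u))$ and the $Y$-NF condition controls $h(u):=g(u)\oplus Odd(g(u))\subseteq\{u\}\cup O$, giving $h(u)\subseteq O$. The same strategy applies, but $g$ itself need not avoid $B$ since $u\in g(u)$ for $u\in B$; one must replace $g$ by a modified gflow $g'$ whose support is disjoint from $B$ and which treats every vertex of $B$ as measured in the $\{X,Y\}$-plane (forced for the new inputs by Property~\ref{prop:XY}). I would construct $g'$ by a recursive symmetric-difference procedure in the spirit of the proof of Theorem~\ref{thm:sc}: process the vertices of $\comp O$ along a linear extension of the gflow order, and for each $u$ cancel the $B$-vertices occurring in (a suitable variant of) $g(u)$ by adding $g(v)$ for the relevant $v\in B$, which removes $v$ (because $v\in g(v)$) at the price of introducing only $\prec$-larger vertices; one then checks that $g'(u)\subseteq\comp{I\cup B}$, that the gflow conditions hold for the re-assigned planes, and that extensivity is preserved. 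Then $(G,I\cup B,O,\lambda')$ has the gflow $g'$, and again $|I|+|B|\le|O|$.

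The main obstacle is precisely this verification in the $\sigma=Y$ case: for $\sigma=Z$ no repair is needed because bad vertices are $\{X,Y\}$-measured and hence satisfy $u\notin g(u)$, so the given gflow already sidesteps $B$, whereas for $\sigma=Y$ the bad vertices satisfy $u\in g(u)\cap Odd(g(u))$ and genuinely appear in their own correction sets; removing them from the gflow's support while keeping extensivity under control (changing $g(u)$ by $g(v)$ also changes $Odd(g(u))$ by $Odd(g(v))$, so the $Odd$ side must be tracked too) and handling the plane re-assignment $\{X,Z\}\rightsquigarrow\{X,Y\}$ requires some care. (An alternative is to argue directly that the vectors $h(u)\cap\comp I\in\mathbb F_2^{O\setminus I}$, for $u$ ranging over $B\cup(I\cap\comp O)$, are linearly independent --- which gives $|B|+|I\cap\comp O|\le|O\setminus I|$, equivalently $|B|\le|O|-|I|$; but the naive minimal-element argument familiar from focused gflows is inconclusive here, again because of the double condition $u\in g(u)\cap Odd(g(u))$, and one needs a more careful induction along the partial order.)
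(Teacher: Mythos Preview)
Your treatment of the $\sigma=Z$ case is complete and in fact cleaner than the paper's: rather than processing the bad vertices one at a time by induction, you observe directly that the $Z$-NF condition together with $u\notin g(u)$ for $\{X,Y\}$-measured vertices forces $g(u)\cap B=\emptyset$ for every $u\in\comp O$, so the given $g$ is already a gflow of $(G,I\cup B,O,\lambda)$ and $|I\cup B|\le|O|$ follows at once.

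For $\sigma=Y$, however, the proposal has a genuine gap that you yourself flag but do not close. The symmetric-difference procedure you sketch cannot remove $u_0$ from $g(u_0)$ when $u_0\in B$: any corrective term $g(v)$ with $v\in B$ and $u_0\prec v$ satisfies $u_0\notin g(v)$ by extensivity, so XOR-ing it in leaves the membership of $u_0$ in $g(u_0)$ untouched; and using some $v\prec u_0$ would destroy extensivity for $u_0$. Thus your scheme cannot produce a $g'(u_0)$ with $u_0\notin g'(u_0)$, which is exactly what the re-assigned plane $\lambda'(u_0)=\{X,Y\}$ requires. Simply dropping $u_0$ from $g(u_0)$ is no better: $Odd(g(u_0)\setminus\{u_0\})=Odd(g(u_0))\oplus N(u_0)$, and $N(u_0)$ may contain vertices $\prec u_0$, wrecking extensivity. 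Your alternative linear-independence route is likewise left as a wish.

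The paper closes this gap with an idea absent from your proposal: it does \emph{not} stay on the same graph. Treating one bad vertex $u_0$ at a time, it attaches a fresh degree-one neighbour $u_1$ to $u_0$, makes $u_0$ an input with plane $\{X,Y\}$ and $u_1$ a non-input with plane $\{Y,Z\}$, and sets $g''(u_0)=\{u_1\}$ (so $Odd(g''(u_0))=\{u_0\}$, giving the required $\{X,Y\}$-condition for $u_0$) and $g''(u_1)=g'(u_0)\oplus\{u_0,u_1\}$, where $g'$ is the usual symmetric-difference fix removing $u_0$ from all other correction sets. One checks this is again a $Y$-NF gflow of the enlarged open graph; both the bad-vertex count and the input defect drop by one, and induction finishes. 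The graph modification is precisely what sidesteps the self-membership obstruction $u_0\in g(u_0)$ that blocks your same-graph approach.
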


\begin{proof}
Given $(G,I,O,\lambda)$ with a $\sigma$-NF gflow $g$ where $\sigma\in \{Y,Z\}$, we show that any non-input vertex which is measured in a plane which does not contain  $\sigma$ can be, roughly speaking, turned into an input vertex. The proof is by induction on $|\{u \in \comp I \cap \comp O ~|~ \sigma\notin \lambda(u)\}|$. If $|\{u \in \comp I \cap \comp O ~|~ \sigma\notin \lambda(u)\}| = 0$ the property is satisfied since determinism implies   $|I|\le |O|$. 
Otherwise, let $u_0\in \comp I\cap \comp O$ s.t. $\sigma\notin \lambda(u_0)$ and let $g'(u):=\begin{cases} g(u)&\text{if $u=u_0$ or $u_0\notin g(u)$}\\ g(u)\oplus g(u_0)&\text{otherwise}\end{cases}$. $g'$ is a $\sigma$-NF gflow s.t. $\forall u\in \comp O \setminus \{u_0\}$, $u_0\notin g'(u)$. 
\\{[Z-NF]} If $\sigma=Z$, $\lambda(u_0) = \{X,Y\}$, so $u_0\notin g'(u_0)$. As a consequence $\forall  u\in \comp O, g'(u)\in \comp{(I\cup \{u_0\})}$, and $g'$ is a Z-NF gflow of $(G,I\cup \{u_0\},O, \lambda)$: in this new extended open graph the number of measurement-planes which do not contain $Z$ is decreased by one, as well as the input defect i.e., the difference between the number of outputs and inputs. 
\\{[Y-NF]} If $\sigma=Y$, a new degree-one vertex $u_1$ is connected to $u_0$, and let $g'':\comp O\to 2^{\comp {(I\cup \{u_0\})}}$ be defined as follows\\\centerline{$g''(u):=\begin{cases}\{u_1\}&\text{if $u=u_0$}\\ g'(u_0)\oplus \{u_0, u_1\} &\text{if $u=u_1$}\\
g'(u)&\text{otherwise}\end{cases}$}
$g''$ is a Y-NF gflow for $(G',I\cup\{u_0\},O, \lambda')$, where $G'$ is the graph $G$ augmented with the dangling vertex $u_1$, and $\lambda'(u)=\begin{cases}\{X,Y\}&\text{if $u=u_0$}\\\{Y,Z\}&\text{if $u=u_1$}\\\lambda(u)&\text{otherwise}\end{cases}$. In this new open graph the number of inputs is increased by one, so the input defect decreases by one, moreover the number of measurement planes which do not contain $Y$ also decreases by one since $u_1$ is measured in the $\{Y,Z\}$-plane in this new open graph. \hfill $\Box$
\end{proof}

 \begin{corollary} Given $\sigma\in \{Y,Z\}$ and an extended open graph $(G,I,O,\lambda)$ with gflow such that $|I| = |O|$, $(G,I,O,\lambda)$ has a $\sigma$-NF gflow if and only if for any $ u\in \comp{I}\cap \comp{O}$, $\sigma\in\lambda(u)$. 
 \end{corollary}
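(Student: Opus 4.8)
The plan is to derive the corollary directly by combining Theorem~\ref{thm:sc} and Theorem~\ref{thm:cn}, using the hypothesis $|I|=|O|$ to collapse the bound in Theorem~\ref{thm:cn} to zero. The statement is a biconditional, so I would handle the two implications separately, and neither requires any new construction.

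For the ``if'' direction, I would assume that $\sigma\in\lambda(u)$ for every $u\in\comp I\cap\comp O$. This is exactly the statement that $\sigma\in\bigcap_{u\in\comp I\cap\comp O}\lambda(u)$, so since $(G,I,O,\lambda)$ has a gflow by hypothesis, Theorem~\ref{thm:sc} immediately yields a $\sigma$-NF gflow. I would note that here the hypothesis $|I|=|O|$ is not even needed; it only plays a role in the converse.

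For the ``only if'' direction, I would assume $(G,I,O,\lambda)$ has a $\sigma$-NF gflow. Since $\sigma\in\{Y,Z\}$, Theorem~\ref{thm:cn} applies and gives $|\{u\in\comp I\cap\comp O \mid \sigma\notin\lambda(u)\}|\le |O|-|I| = 0$. Hence this set is empty, which is precisely the assertion that $\sigma\in\lambda(u)$ for all $u\in\comp I\cap\comp O$.

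There is essentially no obstacle here: the corollary is a packaging of the two preceding theorems, and the only thing to check is that the hypotheses line up, namely that the existence of a gflow (needed to invoke Theorem~\ref{thm:sc}) is assumed in the corollary, and that $\sigma\in\{Y,Z\}$ (needed for Theorem~\ref{thm:cn}) is part of the statement. If anything deserves a sentence of care, it is just recording that $|O|-|I|=0$ forces the counted set to be empty rather than merely small.
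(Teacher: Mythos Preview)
Your proposal is correct and is exactly the argument the paper intends: the corollary is stated without proof because it is meant to follow immediately from Theorem~\ref{thm:sc} (for the ``if'' direction) and Theorem~\ref{thm:cn} (for the ``only if'' direction, where $|I|=|O|$ forces the bound to vanish). There is nothing to add.
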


 Theorem \ref{thm:cn} shows that in a Z-NF gflow, when a non-input  is measured in the $\{X,Y\}$-plane, this non-input  somehow behaves as an input. Regarding the Y-NF gflow when a non-input qubit is measured in the $\{X,Z\}$-plane, this qubit cannot be seen as an input qubit mainly because all inputs have to be measured in the $\{X,Y\}$-plane (Property \ref{prop:XY}). However, up to a  transformation of the graph, it can be turned into an input (see proof of Theorem  \ref{thm:cn}). One can wonder whether such a transformation exists for X-NF gflow? Surprisingly,  Theorem \ref{thm:cn} cannot be extended to the X-NF case as illustrated by the following counter example where the number of inputs is equal to the number of outputs and which has a X-NF gflow ($1\mapsto \{3\} ; 2\mapsto \{2,3\}$) despite of the measurement of a non-input qubit in the $\{Y,Z\}$-plane:

  \begin{psfrags}
   \psfrag{1}{$1$}
      \psfrag{2}{$2$}
                  \psfrag{4}{$3$}
               \psfrag{XY}[c]{~~~~~$\{X,Y\}$}
                         \psfrag{YZ}[c]{~~~~~$\{Y,Z\}$}
\begin{center}
\includegraphics[scale=0.8]{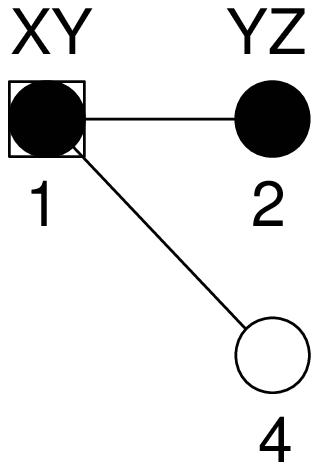}
\end{center}
\end{psfrags}

%
%
%
\bibliographystyle{plain}

\begin{thebibliography}{1}

\bibitem{BK}
A Broadbent, E Kashefi.
\newblock{\emph{Parallelizing quantum circuits.}}
Theoretical computer science 410 (26), 2489-2510. 

\bibitem{BKMP}
D.{\,}E.~Browne, E.~Kashefi, M.~Mhalla, and S.~Perdrix.
\newblock \emph{Generalized flow and determinism in measurement-based quantum
  computation}.
\newblock {\sl New J. Phys.} {\bf 9}, 250, 2007.


\bibitem{DK}
V.~Danos, E.~Kashefi, E.\newblock{\emph{Determinism in the one-way model.}}
 Physical Review A, 74, 2006.
 

\bibitem{DKP}
  V.~Danos, E.~Kashefi, and P.~Panangaden.
  \newblock \emph{The measurement calculus.}
  \newblock {\sl J. ACM {\bf 54}, 2, 2007.}

\bibitem{DKPP}
V.~Danos, E.~Kashefi, P.~Panangaden and S. Perdrix. 
\newblock Extended measurement calculus.
\newblock {Chapter in {\em Semantic Techniques in quantum Computation}}, {\em Cambridge University Press}, 2010.


\bibitem{DP}
R Duncan, S Perdrix.
\newblock{\em Rewriting measurement-based quantum computations with generalised flow.}
ICALP'10, LNCS 6199 pp285-296, 2010.


\bibitem{DPK}
R. Dias da Silva, E. Pius, and E. Kashefi
\newblock{Global Quantum Circuit Optimization}
arXiv:1301.0351.

%

\bibitem{HEB04} M.~Hein, J.~Eisert, and H. J.~Briegel.
\newblock \emph{Multi-party entanglement in graph states.}
\newblock {\sl Phys. Rev. A} {\bf 69}, 062311, 2004. 

 \bibitem{KMMP}
E. ~Kashefi, D.~Markham, M.~Mhalla and S.~Perdrix. 
  \newblock \emph{Information Flow in Secret Sharing Protocols.}
   \newblock {\sl Developments in Computational Models (DCM'09), EPTCS 9, pp. 87-97, 2009}
   
\bibitem{MMPST}
  M.~Mhalla, M.~Murao, S.~Perdrix, M.~Someya, P.~Turner.
  \newblock{Which graph states are useful for quantum~information~processing?}
   TQC'11, LNCS 6745, PP. 174-187, 2014.
   

\bibitem{MP08}
M.~Mhalla,  S.~Perdrix.
\newblock \emph{Finding optimal flows efficiently.}
\newblock {\sl ICALP proceeding Track A}, LNCS, 2008.

\bibitem{MP}
M.~Mhalla,  S.~Perdrix.
\newblock \emph{Graph States, Pivot Minor, and Universality of (X, Z)-measurements.}
IJUC 9(1-2): 153-171, 2013.



\bibitem{Petal}
R. Prevedel, P. Walther, F. Tiefenbacher, P. Bohi, R.
  Kaltenbaek, T. Jennewein, and A. Zeilinger.
\newblock (January 2007).
\newblock High-speed linear optics quantum computing using active feed-forward.
\newblock {\em Nature}, 445(7123):65--69.

\bibitem{RB01}
R.~Raussendorf and H.~Briegel.
\newblock  \emph{A one-way quantum computer.}
\newblock {\sl Phys. Rev. Lett.} {\bf 86}, 5188, 2001.


\bibitem{RBB03}
R. Raussendorf, D.~E. Browne, and H.~J. Briegel.
\newblock (2003).
\newblock Measurement-based quantum computation with cluster states.
\newblock {\em Physical Review A}, 68:022312.
%
%

\bibitem{Wetal}
P. Walther, K.~J. Resch, T. Rudolph, E. Schenck, H.
  Weinfurter, V. Vedral, M. Aspelmeyer, and A. Zeilinger.
\newblock (March 2005).
\newblock Experimental one-way quantum computing.
\newblock {\em Nature}, 434(7030):169--176.

\end{thebibliography}

\end{document}